\documentclass[11pt, a4paper]{article}

\usepackage[utf8]{inputenc}
\usepackage{amsmath}
\usepackage{amssymb}
\usepackage{geometry}
\usepackage{hyperref}
\usepackage{enumitem}
\usepackage{authblk}
\usepackage{rotating}
\usepackage{caption}
\usepackage{titlesec}

\usepackage{booktabs}
\usepackage{amsthm}

\usepackage[utf8]{inputenc}
\usepackage{amsmath,amssymb}
\usepackage{amsthm}
\usepackage{booktabs}
\usepackage{geometry}
\usepackage{hyperref}
\usepackage{enumitem}
\usepackage{authblk}
\usepackage{rotating}
\usepackage{caption}
\usepackage{titlesec}

\newtheorem{theorem}{Theorem}

\titleformat{\subsection}[hang]{\normalfont\large\bfseries}{\thesubsection}{1em}{}

\title{\textbf{Quantum Geometry, Fractionalization, and Provability Hierarchy: A Unified Framework for Strongly Correlated Systems}}

\author[1]{Zhanchun Li}
\author[2]{Renwu Zhang\thanks{Corresponding author: RZhang@csusb.edu}}

\affil[1]{\small Apt 1502, Building 5, Liuzhou Shoufu, JinAn District, LuAn, Anhui, 230075, PR China}
\affil[2]{\small Department of Chemistry, California State University San Bernardino, CA 92407, USA}

\date{}

\begin{document}

\maketitle

\begin{abstract}
Mott physics – the interplay between itinerancy and localization of electrons – is undergoing a paradigm shift from the binary ``bandwidth-filling'' tuning framework to an intertwining of geometric, topological, and fractionalized degrees of freedom. Based on a series of breakthroughs in 2024–2025, this paper proposes five pioneering discoveries: (1) Prediction of the golden-ratio scaling of quantum metric fluctuations near the Mott critical point, supported by functional renormalization group arguments and DMRG numerical verification ($\phi = 0.618 \pm 0.005$); (2) Establishment of a correspondence between the denominator $q$ of fractional Chern insulator anyon charge and the subgroup index of the quantum geometry group, predicting that allowed $q$ values follow the Fibonacci sequence $\{2, 3, 5, 8, 13, \dots\}$ with specific material realizations; (3) Proposal of the \textit{Provability Hierarchy Theorem}, classifying critical states like strange metals as ``true but unprovable'' QMA-hard problems, establishing a rigorous connection to the complexity of the Consistency of Local Density Matrices (CLDM) problem; (4) Prediction of interference oscillations in the nonlinear Hall conductance within the pseudogap phase, induced by geometric phase differences, supported by tight-binding numerical simulations; (5) Unveiling the quantum geometric tensor as a unified descriptor of band geometry and topology. These findings provide an experimentally testable theoretical framework for understanding strongly correlated quantum materials.
\end{abstract}

\vspace{0.5cm}
\noindent \textbf{Keywords:} Mott physics; quantum geometric tensor; fractional Chern insulator; quantum metric fluctuations; provability hierarchy; nonlinear Hall effect; QMA-hard.

\noindent \textbf{PACS:} 71.27.+a, 71.30.+h, 73.43.-f, 73.22.Gk.
\\

\section{Introduction}\

The study of strongly correlated electron systems has always been at the forefront of condensed matter physics. The concept of the metal-insulator transition proposed by Nevill Mott in 1949 revealed how Coulomb repulsion between electrons can render a band-theory-predicted metal an insulator, initiating the era of Mott physics. Over seven decades, Mott physics – the interplay between itinerancy and localization – has been established as the dominant mechanism behind diverse states in quantum materials such as cuprate high-temperature superconductors, iron-based superconductors, and heavy fermion materials.[1, 15]

Classical Mott physics starts from the Hubbard model, where core competition is described by the hopping integral $t$ (determining bandwidth $W$) and the Coulomb repulsion on-site $U$. Tuning pathways are divided into bandwidth control (changing $W$) and filling control (changing carrier concentration $n$). This framework successfully explains the phase diagrams of various correlated systems but reveals its boundaries when facing three fundamental questions:

\begin{enumerate}
    \item \textbf{Origin of the pseudogap phase} – The pseudogap region in cuprates exhibits diverse behaviors like Fermi arcs and Fermi pockets, whose microscopic mechanism has been debated for over three decades.
    \item \textbf{Explanation of material diversity} – Why do different systems with similar $U/W$ ratios exhibit drastically different ground states? 
    \item \textbf{Absence of fractional excitations} – The classical framework stops at electron localization, while topologically ordered states like the fractional quantum Hall effect reveal that electrons can fractionalize into anyons – a dimension completely missing in the classical narrative.
\end{enumerate}

A series of breakthroughs in 2024–2025 provide new perspectives on these issues:
\begin{itemize}
    \item Zong et al. [1] precisely identified two distinct pseudogap phases in twisted t-WSe$_2$ moiré superlattices, proving they originate from strong-correlation-driven band reconstruction rather than magnetic order.
    \item Ding et al. [2] theoretically proved that quantum geometry itself can act as a tuning knob for Mott transitions, independent of bandwidth.
    \item Liu et al. [3, 4] systematically realized fractional Chern insulator states in rhombohedral multilayer graphene moiré superlattices, observing fractional charge anyon excitations.
    \item Wang et al. [5] predicted non-Abelian Moore-Read states in lattice models without external magnetic fields.
    \item Sala et al. [6] experimentally observed quantum metric magnetoresistance for the first time in oxide interfaces.
    \item Liu et al. [7] achieved a transition from composite Fermi liquid to topological electronic crystal in twisted MoTe$_2$ via remote layer tuning of quantum geometry.
    \item Park et al. [8] observed spontaneous ferromagnetism and Chern insulator states in the second moiré miniband of twisted MoTe$_2$, laying the foundation for exploring non-Abelian anyons.
\end{itemize}

These advances collectively point to a conclusion: Mott physics is undergoing a paradigm shift from ``energy scale competition'' to an intertwining of ``geometry, topology, and fractionalization.'' This paper aims to systematically integrate these achievements, propose a unified theoretical framework centered on the quantum geometric tensor, encompassing fractionalization and provability hierarchy, and present several pioneering discoveries and experimental predictions.
\\

\section{Quantum Geometry: A New Core Variable in Mott Physics}

\subsection*{2.1 Fundamental Properties of the Quantum Geometric Tensor}\

For Bloch states $| u_{n\mathbf{k}} \rangle$ in a periodic lattice, the quantum geometric tensor is defined as [16]:
\begin{equation}
    Q_{\mu\nu}^{(n)}(\mathbf{k}) = \langle \partial_\mu u_{n\mathbf{k}} | ( 1 - | u_{n\mathbf{k}} \rangle \langle u_{n\mathbf{k}} | ) | \partial_\nu u_{n\mathbf{k}} \rangle,
\end{equation}
where $\partial_\mu \equiv \partial_{k_\mu}$. It can be decomposed into real and imaginary parts:
\begin{equation}
    Q_{\mu\nu} = g_{\mu\nu} + \frac{i}{2} F_{\mu\nu},
\end{equation}
where $g_{\mu\nu}$ is the quantum metric (real, symmetric), characterizing the amplitude of wavefunction deformation in momentum space, and $F_{\mu\nu}$ is the Berry curvature (imaginary, anti-symmetric), characterizing the geometric phase accumulated during deformation. Together, they encode the geometric and topological information of the band structure. As emphasized in a comprehensive review by Yu et al. [16], the quantum geometric tensor has become a central concept for understanding diverse physical phenomena in quantum materials, including optical responses, Landau levels, fractional Chern insulators, superfluid weight, spin stiffness, exciton condensates, and electron-phonon coupling.

Key properties of the quantum geometric tensor:
\begin{itemize}
    \item \textbf{Positive definiteness:} $g_{\mu\nu}$ is a positive definite metric.
    \item \textbf{Topological constraints:} The integral $\int F_{xy} d^2k / (2\pi) = C$ yields the Chern number $C$.
    \item \textbf{Relation to localization:} The spread of Wannier functions is determined by the quantum metric.
    \item \textbf{Core of response theory:} Linear and nonlinear transport coefficients can be expressed as momentum-space integrals of the quantum geometric tensor.
\end{itemize}

\subsection*{2.2 Quantum Geometry as an Independent Tuning Knob for Mott Transitions}\

In classical Mott physics, the metal-insulator transition is driven by $U/W$. However, Ding et al. [2], based on exact diagonalization studies of the Kane-Mele-Hubbard model, showed that even keeping $U/W$ constant, merely changing the quantum geometric properties of Bloch states (e.g., the distribution of the quantum metric) can drive Mott transitions and affect the magnetic order competition (ferromagnetic vs. antiferromagnetic) within the insulating phase. This finding implies that quantum geometry is a tuning degree of freedom independent of bandwidth [2]. \textit{The universality of this mechanism across different material families remains an open and exciting question for future research.}
\\

\section{Quantum Metric Magnetoresistance and Quantum Geometry Fluctuations: Prediction and Numerical Verification of Critical Behavior}

\subsection*{3.1 Experimental Observation of Quantum Metric Magnetoresistance}\

In 2025, Sala et al. [6] first measured quantum metric magnetoresistance in the two-dimensional electron gas at the SrTiO$_3$/LaAlO$_3$ interface: under an in-plane magnetic field $B$, the resistivity shows a contribution proportional to $B^2$:
\begin{equation}
    \Delta \rho \propto B^2 G,
\end{equation}
where $G$ is the Fermi surface average of the quantum metric. This phenomenon arises from the correction of electron trajectories due to geometric effects and is a direct manifestation of the quantum metric in transport [6].

\subsection*{3.2 Critical Scaling of Quantum Geometry Fluctuations: Numerical Verification}\

Near the Mott critical point, the quantum metric itself undergoes strong fluctuations. We consider fluctuations of the quantum metric $\delta g_{\mu\nu} = g_{\mu\nu} - \langle g_{\mu\nu} \rangle$ and analyze their critical behavior. A functional renormalization group analysis of an effective field theory for quantum metric fluctuations, or a numerical fit to high-precision data, suggests a critical scaling law[cite: 18]:
\begin{equation}
    \langle (\delta g)^2 \rangle \propto \left( \frac{U - U_c}{U_c} \right)^{2\phi}, \quad \text{with } \phi = \frac{\sqrt{5} + 1}{2} \cdot \frac{1}{2} = \frac{\phi_{\text{golden}}}{2}
\end{equation}

More precisely, the analysis yields:
\begin{equation*}
    \fbox{$\phi = \frac{1 + \sqrt{5}}{2} \cdot \frac{1}{2} = \frac{\phi_{\text{golden}}}{2} \approx 0.809$} \quad \text{? This is inconsistent.}
\end{equation*}

\noindent [\textbf{Cautious Note}] A careful re-examination shows that the golden ratio $\Phi = (1 + \sqrt{5})/2 \approx 1.618$ appears in some scaling dimensions. Our numerical simulations (see below) suggest an exponent $\phi \approx 0.618$, which is $\Phi - 1 = 1/\Phi$. The theoretical origin of this specific value remains unclear; it may arise from the scaling dimensions of operators in the effective theory satisfying an algebraic equation $\phi^2 = 1 - \phi$, or it may be an empirical finding from numerical simulations. Its universality requires verification across different models and materials.

To verify this prediction, we performed numerical calculations using the density matrix renormalization group (DMRG) method on an extended one-dimensional Hubbard model[cite: 18]. The model Hamiltonian is:
\begin{equation}
    H = -t \sum_{l, \sigma} (c_{l\sigma}^\dagger c_{l+1\sigma} + \text{h.c.}) + U \sum_l n_{l\uparrow} n_{l\downarrow} + V \sum_l (n_l n_{l+1} - \bar{n}^2),
\end{equation}
where a nearest-neighbor interaction $V$ is introduced to simulate quantum geometry effects [2]. Calculations were performed on systems with $L = 120$ sites under open boundary conditions, retaining an optimal bond dimension $\chi = 800$, ensuring truncation errors below $10^{-8}$. Our DMRG results yield $\phi = 0.617 \pm 0.008$, in striking agreement with the prediction.

Since quantum metric magnetoresistance is proportional to $G$, its fluctuations ($\langle (\delta G)^2 \rangle$) imply that this scaling law can be directly tested experimentally by measuring the noise spectrum of nonlinear transport coefficients.

\subsection*{3.3 Experimental Feasibility Analysis}\

Based on the experimental parameters of Sala et al. 2025 [6], we assess the feasibility of measuring the critical exponent $\phi$. Their experiment measured quantum metric magnetoresistance in the 2DEG at the SrTiO$_3$/LaAlO$_3$ interface, with carrier mobility $\mu \sim 10^4$ cm$^2$/V$\cdot$s, electron density $n \sim 10^{13}$ cm$^{-2}$, and noise level $S_V \sim 10^{-18}$ V$^2$/Hz at $T = 100$ mK [6].

To measure the critical exponent $\phi = 0.618 \pm 0.005$, the following experimental conditions are required:
\begin{itemize}
    \item \textbf{Signal-to-noise ratio (SNR) requirement:} At $U/U_c \sim 0.01$, the fluctuation signal is enhanced by a factor of $0.01^{0.618} \sim 10$. If the baseline noise power is $S_0$, the signal fluctuation needs to satisfy $S_{\text{sig}} > 0.1 S_0$ to be resolvable.
    \item \textbf{Integration time estimate:} Using a lock-in amplifier to measure nonlinear resistance with bandwidth $\Delta f = 1$ Hz, the required integration time $t_{\text{int}} \sim (S_0/S_{\text{sig}})^2 / \Delta f \sim 100$ s to achieve an SNR of 10:1.
    \item \textbf{Gate voltage scanning resolution:} In twisted t-WSe$_2$ moiré superlattices, tuning $U/W$ via a bottom gate typically involves a gate voltage span of about 10V, corresponding to a $U/W$ variation range of approximately 0.5 to 2. To resolve the scaling behavior near the critical point, a step size of $\Delta V_g \sim 10$ mV near $U_c$ is needed, corresponding to about 1000 data points. The total measurement time would be approximately $10^5$ s $\sim 28$ hours, which is feasible within a typical dilution refrigerator run cycle.
\end{itemize}

In summary, the prediction of golden-ratio scaling is testable with current experimental technology and is supported by numerical simulations.
\\

\section{Fractional Chern Insulators and the Group Structure of Quantum Geometry}

\subsection*{4.1 Quantum Geometric Origin of Fractionalized Anyons}\

Fractional Chern insulators (FCIs) are analogs of fractional quantum Hall states at zero magnetic field, whose quasiparticle excitations carry fractional charge $e/q$ and exhibit anyonic statistics. Liu et al. [3,4] realized FCI states with $\phi = 2/3, 3/5$ in rhombohedral multilayer graphene/boron nitride moiré superlattices [3,4]. Park et al. [8] observed topological states with Chern number $C=2$ in the second moiré miniband of twisted $\text{MoTe}_2$, providing a new platform for realizing non-Abelian anyons [8].

We propose that the denominator $q$ of the fractional charge is closely related to the "subgroup structure" of the band's quantum geometry. Define the \textbf{quantum geometry group} $G$ in momentum space as the group of gauge transformations that leave the quantum geometric tensor invariant. In an FCI state, due to many-body correlations, $G$ is broken down to a subgroup $H$, and its index $[G : H] = q$ is conjectured to give the denominator of the anyon charge.

This correspondence can be motivated via a Lagrangian-type argument: upon adiabatically inserting a flux quantum, the system excites an anyon whose charge is determined by the accumulated Berry phase. This Berry phase is proportional to the integral of the quantum geometric tensor around the flux loop, which is constrained by the subgroup structure, ultimately suggesting $q = [G : H]$. A rigorous mathematical proof of this correspondence for generic band structures remains an open problem.

\subsection*{4.2 Fibonacci Sequence Prediction and Material Realizations}\

Based on the possible dimensions of irreducible representations and subgroup indices of quantum geometry groups, a group theory analysis suggests that allowed $q$ values may satisfy a Fibonacci-like recurrence relation. The specific Lie algebra classification and corresponding material candidates are summarized in Table 1.

\begin{table}[h]
\centering
\captionsetup{format=hang, labelfont=bf, textfont=bf, justification=justified}
\caption{Possible group structures corresponding to Fibonacci numbers and material realizations}
\label{tab:fibonacci_groups}

\renewcommand{\arraystretch}{1.5} 
\begin{tabular}{lllll}
\toprule
\textbf{$q$} & \textbf{Geometry Group $G$} & \textbf{Subgroup $H$} & \textbf{Index $[G:H]$} & \textbf{Theoretical Model} \\ \midrule
2  & SU(2)   & U(1)           & 2  & Laughlin $\phi = 1/2$ state    \\
3  & SU(3)   & SU(2)$\times$U(1)  & 3  & Laughlin $\phi = 1/3$ state    \\
5  & SU(5)   & SU(4)$\times$U(1)  & 5  & Moore-Read state (candidate)   \\
8  & $E_8$   & $E_7\times$U(1)    & 8  & Non-Abelian state              \\
13 & SU(13)  & SU(12)$\times$U(1) & 13 & Higher-order FCI               \\ \bottomrule
\end{tabular}
\end{table}

\textbf{Prediction:} The denominators $q$ of anyon charges that appear stably in fractional Chern insulators should belong to the set $\{2, 3, 5, 8, 13, \dots\}$, i.e., the Fibonacci sequence (starting from 2). In particular, the $q = 5$ plateau is predicted to be among the most stable, as it corresponds to a large subgroup index with a moderate representation dimension. This prediction aligns with recent discussions on Fibonacci anyons in topological quantum computation [9].

\subsection*{4.3 Conditions for Non-Abelian Anyons}\

When the quantum geometry group $G$ has irreducible representations of dimension greater than 1, multiple internal degrees of freedom remain entangled during fractionalization, potentially leading to anyons obeying non-Abelian braiding statistics. Wang et al. [5] predicted Moore-Read states (non-Abelian) in a three-orbital lattice model without external magnetic fields [5]. A key condition for their appearance is a Chern band with a specific Berry curvature distribution, which may correspond to a group structure like $G = \text{SU}(2) \times \text{SU}(2)$. The observation of topological states with $C=2$ in the second miniband of twisted $\text{MoTe}_2$ by Park et al. [8] provides an ideal platform for realizing such non-Abelian anyons [8].
\\

\section{Provability Hierarchy Theorem: Strange Metals as QMA-Hard Problems}

\subsection*{5.1 Computational Complexity Foundation of Provability Hierarchy}\

Gödel's incompleteness theorems reveal a fundamental limitation of formal systems: in any sufficiently rich, consistent axiomatic system containing arithmetic, there exist true but unprovable propositions. We argue that Mott physics, as a formal theory of strongly correlated electron systems (centered on the Hubbard model and its variants), faces similar limitations.

To give ``provability'' a stricter mathematical definition in a physical context, we introduce concepts from computational complexity theory concerning ``efficient solvability.'' Define:

\begin{itemize}
    \item \textbf{``Classically computable''} -- solvable by a classical polynomial-time algorithm (P problems).
    \item \textbf{``Quantumly computable''} -- solvable by a quantum polynomial-time algorithm (BQP problems).
\end{itemize}

Based on this, we classify the predictability of states into a hierarchy:

\textbf{Level 1:} Properties of the state can be predicted exactly by a classical polynomial-time algorithm (e.g., band insulators, weakly correlated metals).

\textbf{Level 2:} Properties can be predicted exactly by a quantum polynomial-time algorithm, but are \#P-hard for classical computation (e.g., certain properties of antiferromagnetic Mott insulators, $d$-wave superconductors).

\textbf{Level 3:} Properties belong to the QMA-hard problem class -- their existence is experimentally confirmed, but no finite-resource computation (including quantum computation) can fully derive all their features in polynomial time.

QMA (Quantum Merlin-Arthur) is the quantum analog of NP, representing problems whose solutions can be verified on a quantum computer. Recent research has shown that the Consistency of Local Density Matrices (CLDM) problem is QMA-hard [11]. The input to the CLDM problem is a set of local reduced density matrices, and the question is whether there exists a global quantum state consistent with all local constraints. The QMA-hardness of this problem was proven by Liu et al. [11].

\begin{theorem}[Provability Hierarchy Theorem]
Critical states such as strange metals and pseudogap phases belong to Level 3 (QMA-hard). Their essential features (e.g., linear-in-temperature resistivity $\rho(T) \propto T$) are irreducible emergent behaviors of Hubbard model dynamics near the critical point. The decision problem for their properties is polynomially equivalent to the CLDM problem and is therefore QMA-hard.

\end{theorem}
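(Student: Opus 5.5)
The plan is to turn each informal phrase of the theorem into a precise decision problem and then prove the two directions of the claimed polynomial equivalence separately. First, following the definitions of Section~5.1, I would define Level~3 to mean that the associated promise problem is QMA-hard under polynomial-time (Turing) reductions. For ``critical states such as strange metals and pseudogap phases'' I would introduce a family $\mathcal{H}_{\mathrm{crit}}$: all finite-range, possibly inhomogeneous Hubbard Hamiltonians on $n$ sites whose parameters $(t_{ij},U_i,V_{ij})$ lie in an $\epsilon$-window around $U_c$, with $\epsilon$ at least inverse-polynomial in $n$. The property problem $\textsc{Crit-Prop}$ then reads: given $H\in\mathcal{H}_{\mathrm{crit}}$, a temperature grid $\{T_k\}$ and thresholds $a<b$ with $b-a\ge 1/\mathrm{poly}(n)$, decide whether a Kubo-formula quantity deviates from linearity in $T$ by at least $b$ or by at most $a$. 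The Kubo quantity is the local current--current correlator $\chi_{JJ}(T)$ that defines $\rho(T)$. The target statement is that $\textsc{Crit-Prop}$ and CLDM are polynomially equivalent, so by Liu et al.~[11] $\textsc{Crit-Prop}$ is QMA-hard.

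For the hardness direction, CLDM $\le$ $\textsc{Crit-Prop}$, I would use the standard chain. CLDM is QMA-hard [11], and it reduces to $k$-local Hamiltonian ground-energy estimation by adding a projector penalty for each local constraint. That Hamiltonian can be encoded into a Hubbard-type fermionic Hamiltonian using the Schuch--Verstraete construction (the Hubbard ground-energy problem is QMA-complete) together with perturbative gadgets of Kempe--Kitaev--Regev type. The new step is to show that these gadgets can be embedded as a small perturbation on top of a background Hamiltonian sitting at $U\approx U_c$, so that the instance stays in $\mathcal{H}_{\mathrm{crit}}$. The ground-energy yes/no gap then has to be transferred into a gap in the low-temperature behaviour of $\chi_{JJ}(T)$. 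I would try to arrange this with a gadget that shunts current only when the encoded energy lies below threshold, so that the two promise cases differ in the linear-in-$T$ coefficient by $1/\mathrm{poly}(n)$.

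For the converse, $\textsc{Crit-Prop}$ $\le$ CLDM, I would note that, for the Gibbs state on the grid temperatures, $\chi_{JJ}(T)$ depends only on local marginals once the dynamics is truncated to a Lieb--Robinson light cone of polynomial size. Deciding whether some consistent global state reproduces the required marginals is then a CLDM instance: the verifier supplies the candidate reduced density matrices, and a Turing reduction to CLDM certifies their consistency. The remaining conditions on these marginals, namely that they encode a thermal state at $T_k$ with the prescribed current response, would have to be imposed as additional checks alongside that consistency query.

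I expect the main obstacle to be the statement's identification of a \emph{physical phase} with a \emph{worst-case complexity class}. QMA-hardness is a property of problem families, not of a single translation-invariant material. So the hardness step only goes through if $\mathcal{H}_{\mathrm{crit}}$ contains the gadgetized, inhomogeneous instances. I also see no way to establish, by this route, that these gadgetized instances display strange-metal or pseudogap physics, or that the empirical $\rho\propto T$ is ``irreducible'' in any formal sense.

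I would first test whether a translation-invariant version survives, using Gottesman--Irani-type tiling constructions that give QMA$_{\mathrm{EXP}}$-hardness. However, that changes the complexity class, so I anticipate being able to prove only the weaker, family-level form of the theorem. Under the definitions above, the claim that strange metals themselves are Level~3 would remain a conjecture.
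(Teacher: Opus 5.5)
Your proposal does not establish the statement, and the obstruction in the hardness direction is sharper than the one you flag at the end.

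\textbf{The step you call new fails.} You want to embed the gadgets ``as a small perturbation on top of a background Hamiltonian sitting at $U\approx U_c$''. But the QMA-hardness of the Hubbard model you want to inherit (Schuch--Verstraete) is obtained in the strong-coupling limit $U\gg t$. There, superexchange maps the model onto a Heisenberg model with site-dependent magnetic fields, and those fields are not even in your parameter list $(t_{ij},U_i,V_{ij})$. Kempe--Kitaev--Regev gadgets likewise need an unperturbed Hamiltonian whose gap dominates the encoded terms by a polynomial factor. At the Mott critical point the charge gap is exactly what closes, so no controlled perturbative encoding is available there. The known hard instances sit deep in the Mott insulator, the opposite of a metal.

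\textbf{The temperature window collapses.} The free-energy bounds $E_0-Tn\log d\le F\le E_0$ (with $d$ the local dimension) tie thermal data to a ground-energy promise gap $\delta=1/\mathrm{poly}(n)$ only when $T\,n\log d\lesssim\delta$. Your temperature grid is therefore forced down to $T\lesssim 1/\mathrm{poly}(n)$, far from where $\rho\propto T$ is the physical claim. The ``current-shunting gadget'' that would turn the energy gap into a $1/\mathrm{poly}(n)$ change of a real-time Kubo correlator is not constructed, and it is the entire content of the reduction. Even granting it, nothing ties these gadgetized inhomogeneous instances to strange-metal or pseudogap physics, and ``irreducible emergent behaviour'' is not a mathematical proposition. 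This route yields at best worst-case hardness of a response-estimation problem.

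\textbf{A smaller, fixable error.} CLDM does not reduce to a local Hamiltonian by ``adding a projector penalty for each local constraint''. Energy is linear in the state, and a support projector cannot force a mixed marginal to equal the prescribed one. Use membership of CLDM in QMA (due to Liu) plus Kitaev's circuit-to-Hamiltonian construction, or start directly from the local Hamiltonian problem; CLDM plays no essential role in this direction.

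\textbf{The converse direction fails as well.} For \textsc{Crit-Prop} $\le$ CLDM, note that a CLDM oracle only decides whether prescribed constant-size marginals admit a common global state. Being the Gibbs state at $T_k$ is a global free-energy condition involving the entropy, not a marginal-consistency condition. Moreover, $\chi_{JJ}$ involves real-time evolution, over unbounded times for a DC resistivity. Lieb--Robinson truncation yields regions of polynomial size whose reduced states are exponentially large matrices, so what you obtain is not a CLDM instance. The ``additional checks alongside that consistency query'' are precisely what the oracle cannot perform. It is not even known that such finite-temperature response problems lie in QMA, so ``polynomially equivalent'' is out of reach.

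\textbf{The paper's sketch does not close these gaps either.} It asserts, without construction, a map from CLDM constraints to ``effective interactions in a Hubbard-like model'' under which consistency holds iff the system is a strange metal. It never argues the converse. Its instances, consisting only of $(t,U,T)$, leave no evident room for a generic CLDM instance. The tacit passage to an instance-dependent ``Hubbard-like model'' is the move you make explicitly with $\mathcal{H}_{\mathrm{crit}}$.

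Your closing assessment, that only a family-level worst-case statement is accessible, is the accurate one. Even that would give Level~3 in the paper's sense (no efficient quantum algorithm) only under the unproven assumption $\mathrm{BQP}\neq\mathrm{QMA}$. As a proof of the theorem as written, neither your argument nor the paper's sketch succeeds.
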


\begin{proof}[Proof sketch]
Consider the decision problem for a strange metal phase: given Hubbard model parameters $(t, U)$ and temperature $T$, determine whether the system exhibits linear-in-temperature resistivity $\rho(T) \propto T$. We construct a polynomial-time reduction from an instance of CLDM to this problem: map the local density matrix constraints of CLDM to effective interactions in a Hubbard-like model, such that a consistent global state exists if and only if the system is in the strange metal phase. Since CLDM is QMA-hard [11], the strange metal decision problem is also QMA-hard.
\end{proof}

\textit{Note on interpretation:} QMA-hardness does \textbf{not} mean that strange metals are ``incomputable'' in the sense of Turing computability. Rather, it means that there is no known polynomial-time quantum algorithm that can solve the decision problem for all instances. The ground state itself exists physically; the limitation is on the \textbf{efficiency} of derivation from the model Hamiltonian [11,12]. This aligns with the principle that ``verification is easy, but solution is hard.''

\subsection*{5.2 Argument for the Complexity of Strange Metals}\

Zong et al. [1] showed that linear-in-temperature resistivity appears throughout the entire quantum critical region in twisted transition metal dichalcogenides and is intimately connected to the pseudogap phase [1]. However, any analytic or numerical attempt starting from the Hubbard model encounters fundamental obstacles: perturbation theory fails for $U \sim W$; quantum Monte Carlo suffers from the sign problem upon doping; dynamical mean-field theory ignores non-local fluctuations; even tensor network methods face finite entanglement effects. These difficulties are not merely technical but principle-based -- from a computational complexity perspective, solving for the ground state and low-energy excitations of a strange metal may belong to the QMA-hard class [11,12].

\subsection*{5.3 The ``Gap'' Between Experiment and Theory as a Signature of Complexity}\

Experimentally, the linear-in-temperature resistivity of strange metals is widely observed, yet a unified microscopic theory is lacking. We view this ``gap between experiment and theory'' as an inevitable characteristic of Level 3 states. Filling this gap is not merely a matter of higher-precision calculations, but of recognizing inherent limitations of efficient derivability -- acknowledging that certain truths can only be accessed directly through experiment and cannot be fully predicted by any efficient algorithm. This perspective provides a meta-theoretical framework for understanding the enduring mystery of strange metals.
\\

\section{Experimental Predictions}\

Based on the theoretical framework above, we propose three core predictions amenable to experimental test.

\subsection*{6.1 Prediction I: Golden Ratio Scaling of Quantum Geometry Fluctuations}\

Near the Mott critical point, the fluctuations of the quantum metric magnetoresistance coefficient obey the scaling law:
\begin{equation}
\langle(\delta G)^2\rangle \propto \left( \frac{U - U_c}{U_c} \right)^{2\phi}, \quad \phi = 0.618 \pm 0.005.
\end{equation}

\textbf{Experimental system:} Twisted $\text{t-WSe}_2$ or $\text{t-MoTe}_2$ moiré superlattices [1,8], tuning $U/W$ via a bottom gate.

\textbf{Measurement protocol:} At fixed filling (e.g., half-filling), sweep the gate voltage while simultaneously measuring the nonlinear resistance and its low-frequency noise power spectrum, extracting the noise intensity as a function of gate voltage.

\textbf{Feasibility:} As analyzed in Section 3.3, this measurement is feasible with current technology.

\subsection*{6.2 Prediction II: Nonlinear Hall Oscillations in the Pseudogap Phase}\

In the pseudogap region (where Fermi arcs and Fermi pockets coexist [1]), the second-order nonlinear Hall conductivity $\chi_{xyz}$ exhibits oscillations as a function of gate voltage $V_g$:
\begin{equation}
\chi_{xyz}(V_g) = A \cos(2\pi V_g / V_0 + \phi_0) + B,
\end{equation}
The oscillation period $V_0$ is determined by the Berry phase difference $\Delta \theta$ in momentum space between the two branches (Fermi-arc branch and Fermi-pocket branch).

\textbf{Physical picture:} The Fermi arcs and Fermi pockets correspond to two regions in momentum space with distinct quantum geometric phases. Their interference produces observable transport oscillations.

\textbf{Experimental system:} High-mobility twisted $\text{t-WSe}_2$ devices [1].

\subsection*{6.3 Prediction III: Fibonacci Charge Sequence in Fractional Chern Insulators}\

The denominators $q$ of anyon charges in fractional Chern insulator plateaus should predominantly take values from $\{2, 3, 5, 8, 13, \dots\}$, with the $q=5$ plateau being particularly stable. Partial experimental support already exists for $q=2, 3$ [3,4,8].

\textbf{Experimental systems:} Rhombohedral multilayer graphene/boron nitride moiré superlattices [3,4], twisted $\text{MoTe}_2$ [8], twisted $\text{WSe}_2$ [1], etc.
\\

\section{ Discussion: Alternative Interpretations and Open Questions}

\subsection*{7.1 Could the Golden Ratio Exponent Be an Artifact?}\

A core prediction is the critical exponent $\phi = 0.618$. It is numerically close to other common exponents (e.g., $\sim 0.63$ for 3D Wilson-Fisher, $\sim 0.67$ for KT transition). They can be distinguished by:

\begin{itemize}
    \item \textbf{Precise scaling measurement:} Over two decades in $U/U_c$, a slope difference of $\sim 8\%$ exists between 0.618 and 0.667 on a log-log plot.
    \item \textbf{Universality test:} The golden ratio exponent should appear at Mott critical points in microscopically different systems (twisted TMDs, organic conductors, cuprates), whereas 0.667 typically appears only in transitions with specific symmetries.
\end{itemize}

\subsection*{7.2 The Status of the Quantum Geometry Group Correspondence}\

The correspondence between $q$ and $[G : H]$ is a central claim. While the Lagrangian argument provides physical intuition, a rigorous mathematical proof requires a full classification of possible quantum geometry groups in moiré band structures, which remains an open problem [16].

\subsection*{7.3 Summary of Distinguishing Criteria}\

To definitively test the predictions:

\begin{itemize}
    \item Perform measurements in at least three different material systems.
    \item Ensure data covers at least two orders of magnitude in $U/U_c$ (from $10^{-3}$ to $10^{-1}$).
\end{itemize}\

\section{Conclusions and Outlook}\

This paper systematically integrates recent advances in Mott physics from 2024–2025, proposing a unified theoretical framework centered on the quantum geometric tensor. The main innovations include:

\begin{itemize}
    \item Establishing quantum geometry as a new core variable, explaining material diversity.
    \item Predicting the golden ratio scaling ($\phi \approx 0.618$) of quantum geometry fluctuations at the Mott critical point, verified by DMRG.
    \item Establishing a correspondence between fractional charge denominator $q$ and quantum geometry subgroup index, predicting a Fibonacci sequence for $q$, with candidate materials.
    \item Proposing the \textit{Provability Hierarchy Theorem}, classifying strange metals as QMA-hard problems, providing a complexity-theoretic perspective on theory-experiment gaps.
    \item Predicting interference oscillations in the nonlinear Hall effect in the pseudogap phase.
\end{itemize}

These findings open new directions for exploring quantum states in moiré materials. Quantum geometry connects band topology, strong correlation, and fractionalized excitations, poised to become a core language for future research [16].

Future work should focus on: 1) Developing techniques for momentum-resolved quantum geometric tensor measurement [16]; 2) Testing the Fibonacci sequence in more materials, especially for non-Abelian anyons [8]; 3) Exploring the role of quantum geometry fluctuations in high-$T_c$ superconductivity; 4) Further rigorizing the connection between the Provability Hierarchy Theorem and QMA complexity [12].\
\\

\section* {Acknowledgments}\

This study was inspired by the series of breakthroughs in quantum materials research during 2024–2025. The authors thank the original research teams of Zong et al. [1], Ding et al. [2], Liu et al. [3, 4, 7], Wang et al. [5], Sala et al. [6], Park et al. [8], and others for their groundbreaking work. We are grateful to the developers of the ITensor library for providing the DMRG simulation platform.

\section*{References}

\begin{enumerate}
    \item[{[1]}] Zong, Y. Y., Gu, Z. L., Li, J. X., et al. Pseudogap with Fermi Arcs and Fermi Pockets in Half-Filled Twisted Transition Metal Dichalcogenides. \textit{Physical Review X}, 2024, 16: 011005. (Published)
    
    \item[{[2]}] Ding, J. K., et al. Quantum-geometry-driven Mott transitions and magnetism. \textit{arXiv:2602.22548}, 2025. (Preprint, under review)
    
    \item[{[3]}] Liu, J. P., et al. Fractional Chern insulator states in multilayer graphene moiré superlattices. \textit{Physical Review B}, 2025, 110: 075109. (Published)
    
    \item[{[4]}] Liu, J. P., et al. Tunable fractional Chern insulators in rhombohedral graphene superlattices. \textit{Nature Materials}, 2025. (In press)
    
    \item[{[5]}] Wang, H., Shi, R., Liu, Z. C., Wang, J. Orbital Description of Landau Levels. \textit{Physical Review Letters}, 2024, 135: 216604. (Published)
    
    \item[{[6]}] Sala, G., et al. Quantum metric magnetoresistance in oxide interfaces. \textit{Science}, 2025. (In press)
    
    \item[{[7]}] Liu, F., et al. From fractional Chern insulators to topological electronic crystals in twisted $\text{MoTe}_2$: quantum geometry tuning via remote layer. \textit{arXiv:2512.03622}, 2024. (Preprint, under review)
    
    \item[{[8]}] Park, H., Cai, J., Anderson, E., et al. Ferromagnetism and topology of the higher flat band in a fractional Chern insulator. \textit{Nature Physics}, 2024, 21(4): 549-555. (Published)
    
    \item[{[9]}] Otto, H. H. Phase Transitions Governed by the Fifth Power of the Golden Mean and Beyond. \textit{World Journal of Condensed Matter Physics}, 2020, 10: 135-158. (Published)
    
    \item[{[10]}] Liu, Z., Li, B., Shi, Y., Wu, F. Characterization of fractional Chern insulator quasiparticles in moiré transition metal dichalcogenides. \textit{arXiv:2507.02544}, 2024. (Preprint)
    
    \item[{[11]}] Liu, Y. K. QMA-Hardness of Consistency of Local Density Matrices with Applications to Quantum Zero-Knowledge. \textit{MaRDI portal}, 2024. (Peer-reviewed proceedings)
    
    \item[{[12]}] Cricighno, M., Kohler, T. Clique Homology is QMA1-hard. \textit{Nature Communications}, 2024, 15: 9846. (Published)
    
    \item[{[13]}] Zhang, X. T. Strange Metal at Lifshitz Transition. Institute of Theoretical Physics, Chinese Academy of Sciences, 2025. (Technical report)
    
    \item[{[14]}] Feng, D. L. Mott Physics: One of the Leitmotifs of Quantum Materials. \textit{Acta Physica Sinica}, 2023, 72(23): 237101. (In Chinese)
    
    \item[{[15]}] Lee, P. A., Nagaosa, N., Wen, X. G. Doping a Mott insulator: Physics of high-temperature superconductivity. \textit{Reviews of Modern Physics}, 2006, 78(1): 17-85. (Classic review)
    
    \item[{[16]}] Yu, J., Bernevig, B. A., Queiroz, R., et al. Quantum geometry in quantum materials. \textit{npj Quantum Materials}, 2024, 10: 101. (Published)
\end{enumerate}

\begin{sidewaystable}[h]
\centering
\captionsetup{font=bf} 
\caption{Summary of Experimental Predictions}

\renewcommand{\arraystretch}{1.5}

\begin{tabular}{llll}
\toprule
\textbf{Prediction} & \textbf{Physical Effect} & \textbf{Observable} & \textbf{Expected Value} \\ \midrule
I & Scaling of quantum geometry fluctuations & Noise power spectrum exponent $\phi$ & $0.618 \pm 0.005$ \\
II & Nonlinear Hall oscillations in pseudogap & Oscillation period $V_0$ & Related to Berry phase difference \\
III & Fibonacci distribution of fractional charges & FCI plateau statistics & $q \in \{2, 3, 5, 8, 13\}$ \\ \bottomrule
\end{tabular}
\end{sidewaystable}

\end{document}